\newenvironment{theorem}[1][Theorem]{\vspace{0.3cm}\noindent\textbf{#1.} }{\vspace{0.2cm}}
\newenvironment{definition}[1][Definition]{\vspace{0.3cm}\noindent\textbf{#1.} }{\vspace{0.2cm}}
\newenvironment{proof}[1][Proof]{\noindent\textbf{#1.} }{\ \rule{0.5em}{0.5em}}
\begin{document}

\title{Global Quantum Discord in Multipartite Systems } 

\author{C. C. \surname{Rulli}}
\email{clodoaldorulli@gmail.com}
\author{M. S. \surname{Sarandy}}
\email{msarandy@if.uff.br}

\affiliation{Instituto de F\'{\i}sica, Universidade Federal Fluminense, Av. Gal. Milton Tavares de Souza s/n, Gragoat\'a, 
24210-346, Niter\'oi, RJ, Brazil.}

\date{\today }

\begin{abstract}
We propose a global measure for quantum correlations in multipartite systems, 
which is obtained by suitably recasting the quantum discord in terms of relative 
entropy and local von Neumann measurements. The measure is symmetric with respect 
to subsystem exchange and is shown to be non-negative for an arbitrary state. 
As an illustration, we consider tripartite correlations in the Werner-GHZ state and 
multipartite correlations at quantum criticality. In particular, in contrast with the 
pairwise quantum discord, we show that the global quantum discord is able to characterize 
the infinite-order quantum phase transition in the Ashkin-Teller spin chain. 

\end{abstract}

\pacs{03.67.-a, 75.10.Pq, 03.67.Mn}

\maketitle

\section{Introduction}

Quantum correlations constitute a fundamental resource for quantum information 
tasks~\cite{Nielsen:00}. They are rooted in the superposition principle, 
displaying effects with no classical analog. The research on quantum correlation measures 
was initially developed based on the entanglement-separability paradigm~\cite{Amico:08}. 
More recently, however, it has been perceived that entangled states are not the 
only kind of quantum states exhibiting nonclassical features. 
In this context, a suitable measure of quantum correlation has been introduced by 
Ollivier and Zurek~\cite{Ollivier:01}. This measure, which has been designated as 
{\it quantum discord}, is able to capture not only quantum correlations in 
entangled states but also in separable states. It arises as a difference between 
two expressions for the total correlation in a bipartite system (as measured by the 
mutual information), which are classically equivalent but distinct in the quantum 
regime. Remarkably, quantum discord has been revealed as a useful quantity in a number 
of applications, such as quantum critical phenomena~\cite{QPTApp1,QPTApp2} and  
quantum evolution under decoherence~\cite{DecApp}. Moreover, quantum discord has also been 
conjectured to be a resource for speed up in  quantum computation~\cite{DQC1} 
and for locking classical correlations in quantum states~\cite{locking}.

In recent years, generalizations of quantum discord to multipartite states have been 
considered in different scenarios~\cite{Modi:11}. One possible approach is based on 
directly generalizing the quantum mutual information to a multipartite 
system, even though nonunique generalizations are possible in this situation~\cite{Chakrabarty:10,Okrasa:11}. 
Another approach is to define from the beginning a measure based on the relative entropy, which allows for a 
unified view of different correlation sources, such as entanglement, quantum discord, and dissonance~\cite{Modi:10} 
(see also Ref.~\cite{Giogi:11}). 
The aim of this work is to propose a global measure of quantum discord obtained by a systematic extension of the 
bipartite quantum discord, with operational appeal and satisfying the basic requirements of a correlation function. 
In this direction, we suitably recast the standard bipartite quantum discord defined in Ref.~\cite{Ollivier:01} in 
terms of relative entropy and local von Neumann measurements, whence a natural multipartite measure 
for quantum correlations emerges. This measure -- named here as {\it global quantum discord} (GQD) -- is symmetric 
with respect to subsystem exchange and shown to be non-negative for arbitrary states. 
We illustrate our results by computing the tripartite GQD in the Werner-GHZ state and by applying GQD in the 
characterization of the infinite-order quantum phase transition (QPT) in the Ashkin-Teller spin 
chain, where ordinary pairwise quantum discord fails. 

\section{Quantum Discord}

Consider a bipartite system $AB$ composed of subsystems 
$A$ and $B$. Denoting by $\hat{\rho}_{AB}$ the density operator of $AB$ and by $\hat{\rho}_A$ 
and $\hat{\rho}_B$ the density operator of parts $A$ and $B$, 
respectively, the total correlation between $A$ and $B$ is measured by the quantum mutual information 
\begin{equation}
I(\hat{\rho}_{AB}) = S(\hat{\rho}_A) - S(\hat{\rho}_A | \hat{\rho}_B),
\end{equation}
where $S(\hat{\rho}_A) = -{\textrm{Tr}} \hat{\rho}_A \log_2 \hat{\rho}_A$ is the von Neumann 
entropy for $A$ and 
\begin{equation}
S(\hat{\rho}_A | \hat{\rho}_B) = S(\hat{\rho}_{AB}) - S(\hat{\rho}_B)
\end{equation}
is the entropy of $A$ conditional on $B$. The conditional entropy can also be introduced by a 
measurement-based approach. Indeed, consider a measurement locally performed on $B$, which 
can be described by a set of projectors $\{\hat{\Pi}_{B}^{j}\}=\{|b_j\rangle\langle b_j|\}$. 
The state of the quantum system, conditioned on the measurement of the outcome labeled 
by $j$, is
\begin{equation}
\hat{\rho}_{AB|j} = \frac{1}{p_j} \left( \hat{\mathbf{1}}_A \otimes \hat{\Pi}_{B}^{j} \right) \hat{\rho}_{AB} 
\left( \hat{\mathbf{1}}_A \otimes \hat{\Pi}_{B}^{j} \right), 
\end{equation}
where $p_j = {\textrm{Tr}} [ (\hat{\mathbf{1}}_A \otimes \hat{\Pi}_{B}^{j} ) \hat{\rho}_{AB}(\hat{\mathbf{1}}_A \otimes \hat{\Pi}_{B}^{j} )]$ 
denotes the probability of obtaining the outcome $j$ and $\hat{\mathbf{1}}_A$ denotes the identity operator for $A$. 
The conditional density operator $\hat{\rho}_{AB|j}$ allows for the following 
alternative definition of the conditional entropy:
\begin{equation}
S(\hat{\rho}_{AB}|\{\hat{\Pi}_{B}^{j}\}) = \sum_j p_j S(\hat{\rho}_{A|j}),
\end{equation}
where $\hat{\rho}_{A|j} = {\textrm{Tr}}_B\, \hat{\rho}_{AB|j} = (1/p_j)\langle b_j| \hat{\rho}_{AB} |b_j\rangle$, 
with $S(\hat{\rho}_{A|j})=S(\hat{\rho}_{AB|j})$.
Therefore, the quantum mutual information can also be defined by
\begin{equation}
J(\hat{\rho}_{AB}) = S(\hat{\rho}_A) - S(\hat{\rho}_{AB}|\{\hat{\Pi}_{B}^{j}\}). 
\end{equation}
The quantities $I(\hat{\rho}_{AB})$ and $J(\hat{\rho}_{AB})$ are classically equivalent 
but they are distinct in the quantum case. This difference is the quantum discord 
$\overline{\mathcal{D}}\left( \hat{\rho}_{AB}\right)$~\cite{Ollivier:01}, yielding
\begin{equation}
\overline{\mathcal{D}}\left( \hat{\rho}_{AB}\right) =I\left( \hat{\rho}_{AB}\right) -J\left( \hat{\rho}_{AB}\right).
\end{equation}
Note that $\overline{\mathcal{D}}\left( \hat{\rho}_{AB}\right)$ is defined as a non-negative asymmetric  
quantity that depends on ${\{\hat{\Pi}_{B}^{j}\}}$. This dependence can be eliminated  
by minimizing $\overline{\mathcal{D}}\left( \hat{\rho}_{AB}\right)$ over all measurement 
bases $\{\hat{\Pi}_{B}^{j}\}$~\cite{Henderson:01}. 

\section{Relative entropy and symmetric quantum discord}
The quantum relative entropy is a measure of distinguishability 
between two arbitrary density operators $\hat{\rho}$ and $\hat{\sigma}$, 
which is defined as~\cite{Vedral:02}
\begin{equation}
S\left( \hat{\rho}\parallel \hat{\sigma}\right) =\mathrm{Tr}\left( \hat{\rho}%
\log_2 \hat{\rho}-\hat{\rho}\log_2 \hat{\sigma}\right)\, .
\end{equation}
We can express the quantum mutual information $I(\hat{\rho}_{AB})$ 
as the relative entropy between $\hat{\rho}_{AB}$ and the product state 
$\hat{\rho}_{A}\otimes\hat{\rho}_{B}$, i.e.
\begin{equation}
I\left( \hat{\rho}_{AB}\right) =S\left( \hat{\rho}_{AB}\parallel \hat{\rho}_{A}\otimes 
\hat{\rho}_{B}\right).
\label{I-relative}
\end{equation}
In order to express the measurement-induced quantum mutual information 
$J\left( \hat{\rho}_{AB}\right)$ in terms of relative entropy, we need to consider 
a non-selective von Neumann measurement on part $B$ of $\hat{\rho}_{AB}$, 
which yields
\begin{eqnarray}
\Phi _{B}\left( \hat{\rho}_{AB}\right) &=& \sum_{j}
\left( \hat{1}_{A}\otimes \hat{\Pi}_{B}^{j} \right) 
\hat{\rho}_{AB}
\left(\hat{1}_{A}\otimes \hat{\Pi}_{B}^{j}\right) \nonumber \\
&=& \sum_{j}p_{j}\hat{\rho}_{A|j}\otimes \left\vert b_{j}\right\rangle
\left\langle b_{j}\right\vert.
\end{eqnarray}
Moreover, tracing over the variables of the subsystem $A$, we obtain
\begin{equation}
\Phi _{B}\left( \hat{\rho}_{B}\right) =
\Phi_{B}\left( \mathrm{Tr}_{A}\,\hat{\rho}_{AB}\right) 
=\sum_{j}p_{j}\left\vert b_{j}\right\rangle \left\langle b_{j}\right\vert,
\end{equation}
where we have used that $\mathrm{Tr}_{A} (\hat{\rho}_{A|j})=1$. Then, 
by expressing the entropies 
$S\left( \Phi _{B}\left( \hat{\rho}_{AB}\right)\right) $ and 
$S\left( \Phi _{B}\left( \hat{\rho}_{B}\right) \right) $ as
\begin{equation}
S\left( \Phi _{B}\left( \hat{\rho}_{AB}\right) \right) = H\left( \mathbf{p}%
\right) +\sum_{j}p_{j}S\left( \hat{\rho}_{A|j}\right)
\end{equation}
and 
\begin{equation}
S\left( \Phi _{B}\left( \hat{\rho}_{B}\right) \right) =H\left( \mathbf{p}%
\right), 
\end{equation}
with $H\left( \mathbf{p}\right)$ denoting the Shannon entropy
\begin{equation}
H\left( \mathbf{p}\right) =-\sum_{j}p_{j}\log _{2}\left( p_{j}\right),
\end{equation}
we can rewrite $J(\hat{\rho}_{AB})$ as
\begin{eqnarray}
J\left( \hat{\rho}_{AB}\right) &=& S\left( \hat{\rho}_{A}\right) -\sum_{j}p_{j}S\left( 
\hat{\rho}_{A|j}\right) \nonumber \\
\hspace{-0.6cm}&=&S\left( \hat{\rho}_{A}\right) +S\left( \Phi _{B}\left( \hat{\rho}%
_{B}\right) \right) -S\left( \Phi _{B}\left( \hat{\rho}_{AB}\right) \right)
\nonumber \\
\hspace{-0.6cm}&=&S\left( \Phi _{B}\left( \hat{\rho}_{AB}\right) \parallel \hat{\rho}%
_{A}\otimes \Phi _{B}\left( \hat{\rho}_{B}\right) \right). 
\end{eqnarray}
Therefore, the quantum discord can be rewriten in terms of a 
difference of relative entropies: 
\begin{eqnarray}
\overline{\mathcal{D}}\left( \hat{\rho}_{AB}\right) &=& S\left( \hat{\rho}_{AB}\parallel \hat{\rho}%
_{A}\otimes \hat{\rho}_{B}\right) \nonumber \\
&&-S\left( \Phi _{B}\left( \hat{\rho}%
_{AB}\right) \parallel \hat{\rho}_{A}\otimes \Phi _{B}\left( \hat{\rho}%
_{B}\right) \right),
\end{eqnarray}
with minimization taken over $\{\hat{\Pi}_{B}^{j}\}$ to remove 
the measurement-basis dependence.
It is possible then to obtain a natural symmetric extension $\mathcal{D}\left(\hat{\rho}_{AB}\right)$ 
for the quantum discord $\overline{\mathcal{D}}\left( \hat{\rho}_{AB}\right)$. 
Indeed, performing measurements over both subsystems 
$A$ and $B$, we define
\begin{eqnarray}
\mathcal{D}\left(\hat{\rho}_{AB}\right) &=& \min_{\{\hat{\Pi}_{A}^{j}\otimes\hat{\Pi}_{B}^{k}\}} 
\left[  S\left( \hat{\rho}_{AB}\parallel \hat{\rho}%
_{A}\otimes \hat{\rho}_{B}\right) \right. \nonumber \\
&&\left.\hspace{-0.6cm}-S\left( \Phi _{AB}\left( \hat{\rho}%
_{AB}\right) \parallel \Phi _{A}\left( \hat{\rho}_{A}\right) \otimes \Phi
_{B}\left( \hat{\rho}_{B}\right) \right) \right] \, ,  \label{DiscordiaBipartite}
\end{eqnarray}%
where the operator $\Phi _{AB}$ is given by
\begin{equation}
\Phi _{AB}\left( \hat{\rho}_{AB}\right) =\sum_{j,k} \left(\hat{\Pi}_{A}^{j}\otimes 
\hat{\Pi}_{B}^{k} \right) \hat{\rho}_{AB} \left(\hat{\Pi}_{A}^{j}\otimes \hat{\Pi}_{B}^{k}\right) \, .
\end{equation}
Observe that, by writing Eq.~(\ref{DiscordiaBipartite}) in terms of the mutual 
information $I$, we obtain 
\begin{equation}
\mathcal{D}\left( \hat{\rho}_{AB}\right) = 
\min_{\{\hat{\Pi}_{A}^{j}\otimes\hat{\Pi}_{B}^{k}\}} \left[
I(\hat{\rho}_{AB}) - 
I(\Phi _{AB}\left( \hat{\rho}_{AB}\right))\right], 
\end{equation}
which is the symmetric version of the 
expression for the loss of correlation due to measurement~\cite{Luo:10,Okrasa:11}. 
Remarkably, $\mathcal{D}\left( \hat{\rho}_{AB}\right)$ is equivalent to the 
measurement-induced disturbance (MID)~\cite{Luo:08} if measurement is performed in the 
eigenprojectors of the reduced density operators of each part (instead of minimization).
Moreover, Eq.~(\ref{DiscordiaBipartite}) also provides the symmetric 
quantum discord considered in Ref.~\cite{Maziero:10} and experimentally witnessed 
in Ref.~\cite{Auccaise:11}. 
As a further step, we can still rearrange Eq.~(\ref{DiscordiaBipartite}) 
in a rather convenient way, yielding
\begin{eqnarray}
\mathcal{D}\left( \hat{\rho}_{AB}\right) &=& \min_{\{\hat{\Pi}_{A}^{j}\otimes\hat{\Pi}_{B}^{k}\}} 
\left[ S\left( \hat{\rho}_{AB}\parallel \Phi
_{AB}\left( \hat{\rho}_{AB}\right) \right) \right. \nonumber \\
&&\left.\hspace{-0.6cm}-S\left( \hat{\rho}_{A}\parallel
\Phi _{A}\left( \hat{\rho}_{A}\right) \right) -S\left( \hat{\rho}%
_{B}\parallel \Phi _{B}\left( \hat{\rho}_{B}\right) \right)\right] .  \label{DD}
\end{eqnarray}

\section{Global quantum discord}

Let us now extend quantum discord as given by Eq.~(\ref{DD}) to multipartite systems. 

\begin{definition}
{\it The global quantum discord $\mathcal{D}\left( \hat{\rho}_{A_1 \cdots A_N} \right)$ for 
an arbitrary multipartite state $\hat{\rho}_{A_1 \cdots A_N}$ under a set of local 
measurements $\{\hat{\Pi}_{A_1}^{j_1} \otimes \cdots \otimes \hat{\Pi}_{A_N}^{j_N}\}$ is defined as}
\begin{eqnarray}
\mathcal{D}\left( \hat{\rho}_{A_1 \cdots A_N} \right) &=& \min_{\{\hat{\Pi}_{k}\}} [\,
S\left( \hat{\rho}_{A_1 \cdots A_N} \parallel \Phi\left( \hat{\rho}_{A_1 \cdots A_N} \right) \right) 
\nonumber \\
&&- \sum_{j=1}^{N}S\left( \hat{\rho}_{A_j} \parallel \Phi _{j} \left( \hat{\rho}_{A_j}\right) \right) \,], 
\label{gqd-def}
\end{eqnarray}
{\it where
$\Phi_j\left( \hat{\rho}_{A_j} \right) = \sum_{j^\prime} \hat{\Pi}_{A_j}^{j^\prime} \, \hat{\rho}_{A_j} \,
\hat{\Pi}_{A_j}^{j^\prime}$ and 
$\Phi\left( \hat{\rho}_{A_1 \cdots A_N} \right) = \sum_{k} {\hat{\Pi}}_{k} \, \hat{\rho}_{A_1 \cdots A_N} \, 
{\hat{\Pi}}_{k}$,
with $\hat{\Pi}_{k} = \hat{\Pi}_{A_1}^{j_1} \otimes \cdots \otimes \hat{\Pi}_{A_N}^{j_N}$ and 
$k$ denoting the index string $(j_1 \cdots j_N$)}. 
\end{definition}

Therefore, a classical state can be defined by $\hat{\rho}_{A_1 \cdots A_N} = \Phi\left( \hat{\rho}_{A_1 \cdots A_N} \right)$, 
which is in agreement with the requirement that classical states are not disturbed by suitable local measurements. 
Indeed, this definition of a classical state implies that $\hat{\rho}_{A_j}=\Phi _{j} \left( \hat{\rho}_{A_j}\right)$ 
for any $j$, which means $\mathcal{D}\left( \hat{\rho}_{A_1 \cdots A_N} \right)=0$. 
Moreover, observe that, via minimization over the set of projectors $\{\hat{\Pi}_{A_1}^{j_1} \otimes \cdots \otimes \hat{\Pi}_{A_N}^{j_N}\}$,  
we define GQD as a measurement-basis independent quantity. However, as will be illustrated in the Ashkin-Teller chain, other (non-minimizing) 
bases are also able to provide relevant information about the behavior of quantum correlations in the system (similarly to the 
original definition of quantum discord in Ref.~\cite{Ollivier:01}).
In any case, we can show that GQD is non-negative for an arbitrary state. 

\begin{theorem}
{\it The global quantum discord $\mathcal{D}\left( \hat{\rho}_{A_1 \cdots A_N} \right)$ is 
non-negative, i.e., $\mathcal{D}\left( \hat{\rho}_{A_1 \cdots A_N}  \right) \geqslant 0$}.
\end{theorem}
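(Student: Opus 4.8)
The plan is to prove non-negativity of the bracketed expression in Eq.~(\ref{gqd-def}) for \emph{every} choice of local projectors $\{\hat{\Pi}_k\}$, so that the minimum over measurement bases is non-negative as well. The first key ingredient is that every map entering the definition is a local projective-measurement (dephasing) channel. For any such map $\Phi$ built from a complete orthonormal set of projectors, $\Phi(\hat{\rho})$ is diagonal in the measurement basis, hence $\log_2\Phi(\hat{\rho})$ is diagonal and its trace against $\hat{\rho}$ picks out only the diagonal entries, giving $\mathrm{Tr}[\hat{\rho}\log_2\Phi(\hat{\rho})]=\mathrm{Tr}[\Phi(\hat{\rho})\log_2\Phi(\hat{\rho})]$. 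This yields the identity
\[
S\!\left(\hat{\rho}\parallel\Phi(\hat{\rho})\right)=S(\Phi(\hat{\rho}))-S(\hat{\rho}),
\]
and likewise $S(\hat{\rho}_{A_j}\parallel\Phi_j(\hat{\rho}_{A_j}))=S(\Phi_j(\hat{\rho}_{A_j}))-S(\hat{\rho}_{A_j})$ for each $j$.

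Second, I would insert these identities into the bracket of Eq.~(\ref{gqd-def}). Since $\Phi=\Phi_1\otimes\cdots\otimes\Phi_N$ is a tensor product of local channels and each $\Phi_j$ is trace-preserving, tracing out all parts but $A_j$ gives $\mathrm{Tr}_{\neq A_j}\Phi(\hat{\rho}_{A_1\cdots A_N})=\Phi_j(\hat{\rho}_{A_j})$, so the reduced states of the globally dephased operator are exactly the locally dephased marginals. Collecting the entropy terms, the bracket collapses to a difference of total correlations,
\[
\mathcal{T}(\hat{\rho}_{A_1\cdots A_N})-\mathcal{T}(\Phi(\hat{\rho}_{A_1\cdots A_N})),
\]
where $\mathcal{T}(\hat{\sigma})\equiv\sum_{j}S(\hat{\sigma}_{A_j})-S(\hat{\sigma})$ is the multi-information, i.e.\ the relative entropy of $\hat{\sigma}$ with respect to the product of its own marginals.

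Finally, the remaining inequality $\mathcal{T}(\hat{\rho})\geq\mathcal{T}(\Phi(\hat{\rho}))$ is the statement that a local operation cannot increase the total correlation, which I would derive from the monotonicity of relative entropy under completely positive trace-preserving maps. Writing $\mathcal{T}(\hat{\rho})=S(\hat{\rho}\parallel\hat{\rho}_{A_1}\otimes\cdots\otimes\hat{\rho}_{A_N})$ and applying $\Phi$ gives $\mathcal{T}(\hat{\rho})\geq S(\Phi(\hat{\rho})\parallel\Phi_1(\hat{\rho}_{A_1})\otimes\cdots\otimes\Phi_N(\hat{\rho}_{A_N}))$; because the marginals of $\Phi(\hat{\rho})$ coincide with the $\Phi_j(\hat{\rho}_{A_j})$, the right-hand side is precisely $\mathcal{T}(\Phi(\hat{\rho}))$, which closes the argument.

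The main obstacle is the bookkeeping in the second step: one must verify carefully that the marginals of the globally dephased state reduce to the locally dephased marginals, so that the bracket indeed collapses to a clean difference of multi-informations rather than to some residual cross terms. Once that reduction is secured, non-negativity follows at once from the data-processing inequality, and since the bound holds for every measurement basis, no further appeal to the minimization is required.
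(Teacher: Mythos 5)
Your proof is correct, but it reaches the key inequality by a genuinely different route than the paper. Both arguments ultimately rest on the same core statement --- that local dephasing cannot increase the total correlation, i.e. $\sum_j S( \hat{\rho}_{A_j}) - S( \hat{\rho}_{A_1\cdots A_N}) \geqslant \sum_j S( \Phi_j( \hat{\rho}_{A_j})) - S( \Phi( \hat{\rho}_{A_1\cdots A_N}))$, which is precisely inequality~(\ref{intermed1}) of the paper --- and both convert it into non-negativity of the bracket in Eq.~(\ref{gqd-def}) via the pinching identity $S( \hat{\rho}\parallel\Phi( \hat{\rho})) = S( \Phi( \hat{\rho})) - S( \hat{\rho})$, which you prove explicitly at the outset and the paper uses implicitly in its final rewriting step. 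The difference lies in how the core inequality is established. You apply monotonicity of relative entropy under general CPTP maps (the data-processing inequality) directly to the dephasing channel $\Phi = \Phi_1\otimes\cdots\otimes\Phi_N$, using that the marginals of $\Phi( \hat{\rho})$ coincide with the $\Phi_j( \hat{\rho}_{A_j})$ --- a point you rightly flag as the main bookkeeping step and verify correctly via trace preservation of the local channels. The paper instead invokes only monotonicity under partial trace: it dilates the measurement to an isometry by attaching ancillas $B_j$, forming the extended state $\hat{\rho}^\prime$ of Eq.~(\ref{rho-ancilla}), whose global entropy equals $S( \hat{\rho}_{A_1\cdots A_N})$ and whose reduction over the ancillas reproduces $\Phi( \hat{\rho}_{A_1\cdots A_N})$, and then traces out the $B$ systems. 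The two ingredients are mathematically equivalent (a Stinespring dilation turns channel monotonicity into partial-trace monotonicity), so the choice is one of economy: your channel-level argument is shorter and avoids the ancilla construction entirely, while the paper's argument gets by with the weaker partial-trace form of monotonicity (as cited from Ruskai) at the cost of building $\hat{\rho}^\prime$ and checking the four entropy identities, Eqs.~(\ref{prime1})--(\ref{prime4}). Your handling of the minimization --- proving the bound pointwise for every measurement basis so that the minimum inherits it --- matches the paper's logic exactly.
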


\begin{proof}
In order to prove that $\mathcal{D}\left( \hat{\rho}_{A_1 \cdots A_N}  \right) \geqslant 0$, 
we associate with each subsystem $A_j$ an ancilla system $B_j$. Therefore, we will define  
a composite density operator $\hat{\rho}_{A_1 \cdots A_N;B_1\cdots B_N}^\prime$ such that 
\begin{equation}
\hat{\rho}_{A_1 \cdots A_N;B_1\cdots B_N}^\prime = \sum_k \sum_{k^\prime} 
{\hat{\Pi}}_{k} \, \hat{\rho}_{A_1 \cdots A_N} \, {\hat{\Pi}}_{k^\prime} \otimes 
\hat{\Lambda}_{k k^\prime},
\label{rho-ancilla}
\end{equation}
where $\hat{\Lambda}_{k k^\prime} = |B_{j_1}\rangle \langle B_{j^\prime_1}| \otimes \cdots \otimes 
|B_{j_N}\rangle \langle B_{j^\prime_N}|$, 
with $k$ and $k^\prime$ denoting the index strings $(j_1 \cdots j_N$) and $(j^\prime_1 \cdots j^\prime_N$), 
respectively. From the monotonicity of the relative entropy under partial trace~\cite{Ruskai:02}, 
for any positive operators $\hat{\sigma}_{12}$ and $\hat{\gamma}_{12}$ such that 
$\mathrm{Tr}\left( \hat{\sigma}_{12}\right) =\mathrm{Tr}\left( \hat{\gamma}_{12}\right)$, 
we have that $S\left( \hat{\sigma}_{12}\Vert \hat{\gamma}_{12}\right) \geqslant S\left( 
\hat{\sigma}_{1}\Vert \hat{\gamma}_{1}\right)$, where 
$\hat{\sigma}_{1} =\mathrm{Tr}_{2}\left( \hat{\sigma}_{12}\right)$ and 
$\hat{\gamma}_{1} =\mathrm{Tr}_{2}\left( \hat{\gamma}_{12}\right)$. Then
$S\left( \hat{\sigma}_{123\ldots N}\Vert \hat{\gamma}_{123\ldots N}\right)
\geqslant \ldots \geqslant S\left( \hat{\sigma}_{123}\Vert \hat{\gamma}_{123}\right) 
\geqslant S\left( \hat{\sigma}_{12}\Vert \hat{\gamma}_{12}\right) \geqslant S\left( \hat{\sigma}_{1}\Vert \hat{\gamma}_{1}\right)$.
By taking $\hat{\rho}_{A_1 \cdots A_N;B_1\cdots B_N}^\prime$ as $\hat{\sigma}$ and 
$\hat{\rho}_{A_1;B_1}\otimes \hat{\rho}_{A_2;B_2}^\prime \otimes \ldots \otimes \hat{\rho}_{A_N;B_N}^\prime$ 
as $\hat{\gamma}$, we obtain
\begin{eqnarray}
&&\hspace{-1.3cm}S\left( \hat{\rho}_{A_1 \cdots A_N;B_1\cdots B_N}^\prime \Vert 
\hat{\rho}_{A_1;B_1}^\prime \otimes \hat{\rho}_{A_2;B_2}^\prime \otimes \ldots \otimes \hat{\rho}_{A_N;B_N}^\prime \right) \nonumber \\
\hspace{1cm} &\geqslant& S\left( \hat{\rho}_{A_1 \cdots A_N}^\prime \Vert \hat{\rho}_{A_1}^\prime \otimes \ldots \otimes \hat{\rho}_{A_N}^\prime \right), 
\label{ineq-1}
\end{eqnarray}
which, from Eq.~(\ref{I-relative}), implies that
\begin{eqnarray}
\sum_{j=1}^N S\left( \hat{\rho}_{A_j;B_j}^\prime \right) - 
S\left( \hat{\rho}_{A_1 \cdots A_N;B_1\cdots B_N}^\prime \right) 
\nonumber \\
\geqslant \sum_{j=1}^N S\left( \hat{\rho}_{A_j}^\prime \right) - 
S\left( \hat{\rho}_{A_1 \cdots A_N}^\prime \right).
\label{Nentropias}
\end{eqnarray}
However, from Eq.~(\ref{rho-ancilla}), it follows the relations
\begin{eqnarray}
&&S\left( \hat{\rho}_{A_1 \cdots A_N;B_1\cdots B_N}^\prime \right) = S\left( \hat{\rho}_{A_1 \cdots A_N}\right) \, , \label{prime1}\\
&&S\left( \hat{\rho}_{A_1 \cdots A_N}^{\prime }\right) = S\left( \Phi \left( \hat{\rho}_{A_1 \cdots A_N}\right) \right) \, \label{prime2},  \\ 
&&S\left( \hat{\rho}_{A_j;B_j}^\prime \right) = S\left( \hat{\rho}_{A_j} \right) \,\,\,\,\,\,\,\,\,\, (\forall j) \, , \label{prime3}\\
&&S\left( \hat{\rho}_{A_j}^{\prime }\right) = S\left( \Phi _{j}\left( \hat{\rho}_{A_j}\right) \right) \,\,\,\, (\forall j) \label{prime4}.
\end{eqnarray}
Insertion of Eqs.(\ref{prime1})-(\ref{prime4}) into inequality~(\ref{Nentropias}) yields
\begin{eqnarray}
&&\hspace{-0.3cm}\sum_{j=1}^{N} S\left( \hat{\rho}_{A_j}\right) - S\left( \hat{\rho}_{A_1 \cdots A_N} \right) \nonumber \\ 
&&\geqslant \sum_{j+1}^{N} S\left( \Phi_{j}\left( \hat{\rho}_{A_j }\right) \right) 
- S\left( \Phi \left( \hat{\rho}_{A_1 \cdots A_N}\right)  \right) \, .
\label{intermed1}
\end{eqnarray}
By rewriting inequality (\ref{intermed1}) in terms of the relative entropy, we obtain
\begin{equation}
S\left( \hat{\rho}_{A_1 \cdots A_N} \Vert \Phi \left( \hat{\rho}_{A_1 \cdots A_N}\right) \right) -
\sum_{j=1}^{N} S\left( \hat{\rho}_{A_j} \Vert \Phi _{j}\left( \hat{\rho}_{A_j}\right) \right) \geqslant 0. 
\end{equation}
The left hand side of the inequality above is exactly the GQD, as defined by Eq.(\ref{gqd-def}). Hence, 
$\mathcal{D}\left( \hat{\rho}_{A_1 \cdots A_N} \right) \geqslant 0$.
\end{proof}

\section{Tripartite correlations in the Werner-GHZ state}

As a first illustration of GQD, we will consider the Werner-GHZ state
\begin{equation}
\hat{\rho} = \frac{\left( 1-\mu \right) }{8}\hat{\mathbf{1}}+\mu \left\vert
GHZ\right\rangle \left\langle GHZ\right\vert , \label{werner-ghz}
\end{equation}
where $0 \le \mu \le 1$ and 
\begin{equation}
\left\vert GHZ\right\rangle =(\left\vert \uparrow \right\rangle
_{A}\left\vert \uparrow \right\rangle _{B}\left\vert \uparrow \right\rangle
_{C}+\left\vert \downarrow \right\rangle _{A}\left\vert \downarrow
\right\rangle _{B}\left\vert \downarrow \right\rangle _{C})/\sqrt{2},
\label{ghz-state}
\end{equation} 
with $\left\vert \uparrow \right\rangle$ and $\left\vert \downarrow \right\rangle$ 
denoting the eigenstates of the Pauli operator $\hat{\sigma}^z$ associated with eigenvalues 
$1$ and $-1$, respectively. 
The Werner-GHZ state provides an interpolation between a fully mixed (uncorrelated) 
state and a maximally correlated pure tripartite state. It is a rather suitable  
state to begin with as we propose a measure for quantum correlation and constitutes 
an interesting scenario to compare multipartite with bipartite correlations, since 
it is a generalization of the two-qubit Werner state~\cite{Werner:89}. 
Let us begin by analyzing GQD in the case of a pure GHZ state ($\mu=1$).

\subsection{GQD for the GHZ state}

Let us focus here on the $GHZ$ state, as defined by Eq.~(\ref{ghz-state}). 
In order to define local measurements for $|GHZ\rangle$, let us consider 
rotations in the directions of the basis vectors of subsystems $A$, $B$, and $C$, 
which are denoted by
\begin{eqnarray}
|+\rangle _{j} &=&\cos \left( \frac{\theta _{j}}{2}\right)
| \uparrow \rangle _{j}+{\textrm{e}}^{i\varphi_j}\sin \left( \frac{\theta _{j}}{2}%
\right) | \downarrow \rangle _{j} \, , \\
| -\rangle _{j} &=&-{\textrm{e}}^{-i\varphi_j}\sin \left( \frac{\theta _{j}}{2}\right)
| \uparrow \rangle _{j}+\cos \left( \frac{\theta _{j}}{2}%
\right) | \downarrow \rangle _{j} \, ,
\end{eqnarray}%
with $j=1,2,3$ for subsystems $A$, $B$, and $C$, respectively. The angles $\theta_i$ take 
values in the interval $[0,\pi)$ and the angles $\varphi_i$ take values in the interval $[0,2\pi)$. 
In order to compute $\mathcal{D}\left( \hat{\rho} \right)$, with 
$\hat{\rho} = \left\vert GHZ\right\rangle \left\langle GHZ\right\vert$, we must 
evaluate the expression
\begin{eqnarray}
\mathcal{D}\left( \hat{\rho} \right) &=& \min_{\theta_i,\varphi_i} \left[\,
S\left( \hat{\rho} \parallel \Phi\left( \hat{\rho} \right) \right) 
- S\left( \hat{\rho}_{A} \parallel \Phi _{A} \left( \hat{\rho}_{A}\right) \right)\right.
\nonumber \\
&&\left. - S\left( \hat{\rho}_{B} \parallel \Phi _{B} \left( \hat{\rho}_{B}\right) \right)
- S\left( \hat{\rho}_{C} \parallel \Phi _{C} \left( \hat{\rho}_{C}\right) \right)\right].
\label{gqd-tri}
\end{eqnarray}
However, $S\left( \hat{\rho}\right) =0$, since $\hat{\rho}$ is pure. 
Moreover, $S\left( \hat{\rho}_{A}\parallel \Phi _{A}\left( \hat{\rho}_{A}\right)
\right) =S\left( \hat{\rho}_{B}\parallel \Phi _{B}\left( \hat{\rho}%
_{B}\right) \right) =S\left( \hat{\rho}_{C}\parallel \Phi _{C}\left( \hat{%
\rho}_{C}\right) \right) =0$, since $\hat{\rho}_{A}$, $\hat{\rho}_{B}$, and 
$\hat{\rho}_{C}$ are proportional to identity operators. Hence, GQD is simply given by
\begin{equation}
\mathcal{D}\left( \hat{\rho} \right) = \min_{\theta_i,\varphi_i} 
S\left( \Phi\left( \hat{\rho}\right) \right) = \min_{\theta_i,\varphi_i}\left[
-\sum_{j}\lambda _{j}\log _{2}\lambda_{j}\right], \label{D-GHZ}
\end{equation}
where $\lambda_j$ are the eigenvalues of the operator $\Phi\left( \hat{\rho}\right)$. 
They can be obtained from projections of the GHZ state over the rotated basis states. 
In order to minimize $S\left( \Phi\left( \hat{\rho}\right) \right)$, we must find out the 
measurement basis that maximizes the purity of $\Phi\left( \hat{\rho}\right)$, i.e., that 
maximizes the dispersion of the eigenvalues $\lambda_j$ with respect to the average of $\{\lambda_j\}$. 
This is obtained for $\theta_i=0$ (i=1,2,3), namely, measurements in the eigenprojectors of $\sigma^z_i$. 
As an illustration, let us consider the case of $\theta_1=0$ and $\varphi_i=0$ (i=1,2,3). 
In this situation, the eigenvalues $\lambda_j$ for the operator $\Phi\left( \hat{\rho}\right)$ 
read
\begin{eqnarray}
\lambda _{1}&=&\lambda _{8}=\frac{1}{2}\cos ^{2}\left( \frac{\theta _{2}}{2}%
\right) \cos ^{2}\left( \frac{\theta _{3}}{2}\right), \label{lambda-1}\\
\lambda _{2}&=&\lambda _{7}=\frac{1}{2}\cos ^{2}\left( \frac{\theta _{2}}{2}%
\right) \sin ^{2}\left( \frac{\theta _{3}}{2}\right), \label{lambda-2} \\
\lambda _{3}&=&\lambda _{6}=\frac{1}{2}\sin ^{2}\left( \frac{\theta _{2}}{2}%
\right) \cos ^{2}\left( \frac{\theta _{3}}{2}\right), \label{lambda-3} \\
\lambda _{4}&=&\lambda _{5}=\frac{1}{2}\sin ^{2}\left( \frac{\theta _{2}}{2}%
\right) \sin ^{2}\left( \frac{\theta _{3}}{2}\right). \label{lambda-4}
\end{eqnarray}
By using Eqs.~(\ref{lambda-1})-(\ref{lambda-4}) into Eq.~(\ref{D-GHZ}), we can directly 
obtain $\mathcal{D}\left( \hat{\rho} \right)$ by minimizing over the angles 
$\theta _{2}$ and $\theta _{3}$. The function $D(\theta_2,\theta_3)$ to be minimized is then 
\begin{equation}
D(\theta_2,\theta_3) = -\sum_{j}\lambda _{j}\log _{2}\lambda_{j}.
\end{equation}
We plot $D(\theta_2,\theta_3)$ as a function of $\theta_2$ and $\theta_3$ in Fig.~\ref{f1}. 
Notice that its minimum, which provides $\mathcal{D}\left( \hat{\rho} \right)$, occurs at 
the boundary values $\theta _{2}=\theta _{3}=0$, where  
$\mathcal{D}\left( \hat{\rho} \right)=1$. This is a manisfestation of the fact 
that any local measurement disturbs the GHZ state, which is detected by a nonvanishing GQD. 
 
\begin{figure}[!ht]
 \centering
\includegraphics[scale=0.36]{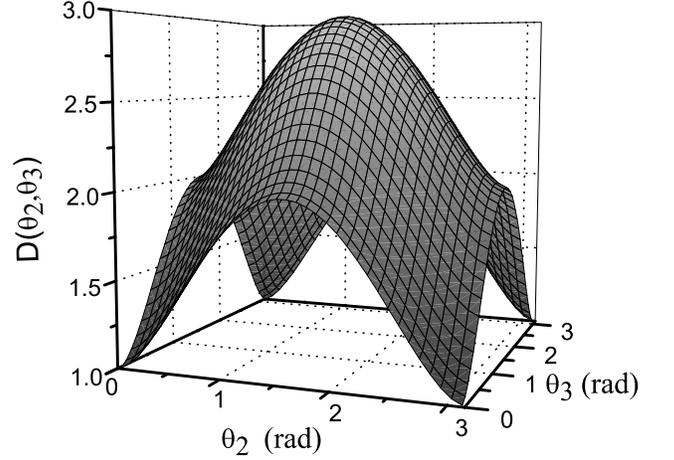}
\caption{The function $D(\theta_2,\theta_3)$ as a function of $\protect\theta _{2}$ 
and $\protect\theta _{3}$. Note that the minimum occurs for 
$\protect\theta _{2}=\protect\theta _{3}=0$, which implies 
$\mathcal{D}\left( \hat{\rho} \right) = 1$.}
\label{f1}
\end{figure}

\subsection{GQD in the Werner-GHZ state}

In order to obtain $\mathcal{D}\left( \hat{\rho}\right)$ for the Werner-GHZ 
state, let us first rewrite Eq.~(\ref{werner-ghz}) as 
\begin{eqnarray}
\hat{\rho} &=&\frac{1}{8}\hat{\mathbf{1}}+\frac{\mu }{8} \left( \hat{\sigma}_{1}^{z}\hat{\sigma}%
_{2}^{z}+\hat{\sigma}_{1}^{z}\hat{\sigma}_{3}^{z}+\hat{\sigma}_{2}^{z}\hat{%
\sigma}_{3}^{z}+\hat{\sigma}_{1}^{x}\hat{\sigma}_{2}^{x}\hat{\sigma}_{3}^{x} \right. \nonumber \\
&&\left.-\hat{\sigma}_{1}^{x}\hat{\sigma}_{2}^{y}\hat{\sigma}_{3}^{y}-\hat{\sigma}%
_{1}^{y}\hat{\sigma}_{2}^{x}\hat{\sigma}_{3}^{y}-\hat{\sigma}_{1}^{y}\hat{%
\sigma}_{2}^{y}\hat{\sigma}_{3}^{x}\right) , \label{werner-ghz-2}
\end{eqnarray}
with $\sigma^{x}_i$, $\sigma^{y}_i$, and $\sigma^{z}_i$ denoting the Pauli matrices 
for the qubit $i$. 
Again, we will have here that 
$S\left( \hat{\rho}_{i}\parallel \Phi _{i}\left( \hat{\rho}_{i}\right)
\right) = 0 \,\, (i=A,B,C)$, since $\hat{\rho}_{A}$, $\hat{\rho}_{B}$, and 
$\hat{\rho}_{C}$ are proportional to identity operators. Therefore, 
\begin{equation}
\mathcal{D}\left( \hat{\rho}\right) = \min_{\theta_i,\varphi_i}  S\left( \hat{\rho}\Vert \Phi \left( \hat{%
\rho}\right) \right) = \min_{\theta_i,\varphi_i} \, \left[S\left( \Phi \left( \hat{\rho}\right) \right) -S\left( \hat{\rho}\right)\right].
\label{d-w}
\end{equation}
The von Neumann entropy $S\left( \hat{\rho}\right)$ is given by 
\begin{eqnarray}
S\left( \hat{\rho}\right) &=& 3-\frac{7}{8}\left( 1-\mu \right) \log _{2}\left(
1-\mu \right) \nonumber \\
&&-\frac{1}{8}\left( 1+7\mu \right) \log _{2}\left( 1+7\mu
\right)\, . \label{s-rho-w}
\end{eqnarray}
As for the GHZ state, we take local measurements in the $\hat{\sigma}^z$ eigenbasis for each particle 
to minimize $S\left( \Phi \left( \hat{\rho}\right) \right)$. 
Such an eigenbasis provides the maximum loss of correlation among the parts of $\rho$, which therefore 
minimizes GQD. Then, from Eq.~(\ref{werner-ghz-2}), we obtain
\begin{equation}
\Phi \left( \hat{\rho}\right) =\left( \frac{1-\mu }{8}\right) \hat{1}+\frac{%
\mu }{8}\left( \hat{1}+\hat{\sigma}_{1}^{z}\hat{\sigma}_{2}^{z}+\hat{\sigma}%
_{1}^{z}\hat{\sigma}_{3}^{z}+\hat{\sigma}_{2}^{z}\hat{\sigma}_{3}^{z}\right), 
\end{equation}
which implies
\begin{eqnarray}
S\left( \Phi \left( \hat{\rho}\right) \right) &=& 3-\frac{3}{4}\left( 1-\mu
\right) \log _{2}\left( 1-\mu \right) \nonumber \\
&&-\frac{1}{4}\left( 1+3\mu \right) \log
_{2}\left( 1+3\mu \right) \, . \label{s-phi-rho-w}
\end{eqnarray}
Insertion of Eqs.~(\ref{s-rho-w}) and (\ref{s-phi-rho-w}) into Eq.~(\ref{d-w}) 
yields
\begin{eqnarray}
\hspace{-0.4cm}\mathcal{D}\left( \hat{\rho}\right) &=& 
-\frac{1}{4}\left( 1+3\mu \right) \log _{2}\left( 1+3\mu \right) \nonumber \\
&&\hspace{-1.2cm}+\frac{1}{8}\left( 1-\mu \right) \log _{2}\left( 1-\mu \right) +\frac{1}{8}\left(
1+7\mu \right) \log _{2}\left( 1+7\mu \right)\, .
\end{eqnarray}
In Fig.~\ref{f2} we plot $\mathcal{D}\left( \hat{\rho}\right)$ as a function of $\mu$. 
Observe that GQD vanishes only for $\mu=0$, where $\hat{\rho}$ is 
a completely mixed state. Moreover, GQD is a monotonic function of $\mu$, acquiring its maximal value 
$\mathcal{D}\left( \hat{\rho}\right)=1$ for $\mu=1$, where $\hat{\rho}$ is the GHZ state. 
This result resembles the behavior of the bipartite Werner state~\cite{Ollivier:01,Luo:08-2}.
\begin{figure}[!ht]
 \centering
\includegraphics[scale=0.30]{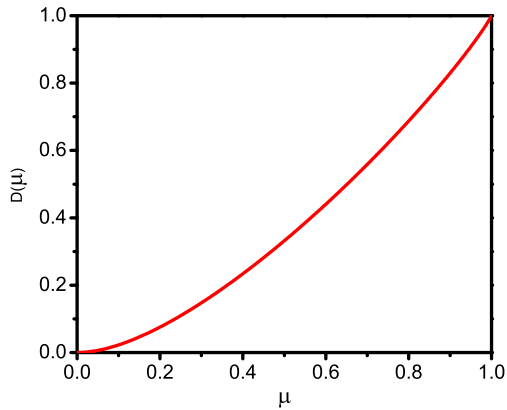}
\caption{(Color online) Tripartite GQD for the Werner-GHZ state as a function 
of the mixing parameter $\mu$. Note that GQD is nonvanishing for $\mu\ne 0$.}
\label{f2}
\end{figure}

\section{Multipartite correlations in the Ashkin-Teller chain}

Let us now present an application of GQD that makes evident the importance 
of considering genuine multipartite correlations to the characterization of a 
QPT. In this direction, we consider the 
Ashkin-Teller model, which has been introduced as a generalization of the Ising spin-1/2 
model to investigate the statistics of lattices with four-state 
interacting sites~\cite{Ashkin:43}. It exhibits a rich phase diagram~\cite{ATpd} 
and has recently attracted a great deal of attention  due to several interesting 
applications~\cite{ATapp}.
The Hamiltonian for the quantum Ashkin-Teller model in one-dimension for a chain with $M$ 
sites is given by 
\begin{eqnarray}
&&H_{AT} =-J\sum_{j=1}^{M}\left( \hat{\sigma}_{j}^{x}+\hat{\tau}_{j}^{x} + \Delta
\hat{\sigma}_{j}^{x} \hat{\tau}_{j}^{x}\right)  \nonumber \\
&&\hspace{-0.5cm}-J\,\beta \sum_{j=1}^{M}\left( \hat{\sigma}_{j}^{z}\hat{\sigma}_{j+1}^{z}
+ \hat{\tau_{j}}^{z}\hat{\tau}_{j+1}^{z} + \Delta \hat{\sigma_{j}}^{z}\hat{\sigma}_{j+1}^{z}
\hat{\tau}_{j}^{z}\hat{\tau}_{j+1}^{z}\right),  \label{at}
\end{eqnarray}
where $\hat{\sigma}_j^\alpha$ and $\hat{\tau}_j^\alpha$ $(\alpha = x,y,z)$ are
independent Pauli spin-1/2 operators, 
$J$ is the exchange coupling constant, 
$\Delta$ and $\beta$ are (dimensionless) parameters, 
and periodic boundary conditions (PBC) are adopted, i.e., $%
\hat{\sigma}^\alpha_{M+1} = \hat{\sigma}^\alpha_{1}$ and $\hat{\tau}^\alpha_{M+1} =
\hat{\tau}^\alpha_{1}$ ($\alpha = x,y,z$). 
The Ashkin-Teller model is $Z_2 \otimes
Z_2$ symmetric, with the Hamiltonian commuting with the parity operators 
\begin{equation}
\mathcal{P}_1 = \otimes_{j=1}^{M} \sigma_j^x \hspace{1cm} {\text{and}} \hspace{%
1cm} \mathcal{P}_2 = \otimes_{j=1}^{M} \tau_j^x .  \label{parity-at}
\end{equation}
Therefore, the eigenspace of $H_{AT}$ can be decomposed into four disjoint
sectors labeled by the eigenvalues of $\mathcal{P}_1$ and $\mathcal{P}_2$,
namely, $Q=0$ $(\mathcal{P}_1 = + 1, \mathcal{P}_2 = + 1)$, $Q=1$ $(\mathcal{%
P}_1 = + 1, \mathcal{P}_2 = - 1)$, $Q=2$ $(\mathcal{P}_1 = -1, \mathcal{P}_2
= - 1)$, and $Q=3$ $(\mathcal{P}_1 = -1, \mathcal{P}_2 = + 1)$. By the
symmetry of $H_{AT}$ under the interchange $\sigma^\alpha \leftrightarrow
\tau^\alpha$, the sectors $Q=1$ and $Q=3$ are degenerate. Moreover, we
observe that the ground state belongs to the sector $Q=0$. 
A schematic view of the Ashkin-Teller chain is shown in Fig.~\ref{f3}. Note that 
each site contains {\it two} spin particles, which means that the number $N$ of 
particles in a chain with $M$ sites is $N=2M$.
\begin{figure}[!ht]
 \centering
\includegraphics[scale=0.36]{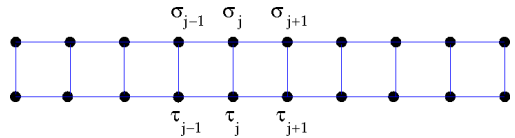}
\caption{(Color online) 
Schematic view of the Ashkin-Teller chain. The lattice is composed by two 
independent spin-1/2 particles \textit{per} site $j$ described by Pauli 
operators $\{\protect\sigma^ \protect\alpha_j, \protect\tau^ \protect\alpha_j\}$.}
\label{f3}
\end{figure}

The model presents an infinite-order quantum critical point at $\beta = 1$ and $\Delta = 1$. 
Infinite-order QPTs are typically detected by an extremum (either a maximum or a minimum) 
in quantum correlations measures (see, e.g., Refs.~\cite{Gu:03,Chen:06} for pairwise entanglement and 
Ref.~\cite{QPTApp1} for pairwise quantum discord). 
However, as shown in Ref.~\cite{Rulli:10}, pairwise entanglement is unable to characterize 
the critical point $(\beta,\Delta) = (1,1)$ in the Ashkin-Teller chain. Moreover, it can  
be shown that pairwise quantum discord does not detect such a QPT either. 
Indeed, taking $\beta=1$, the density operator $\hat{\rho}^{j,j}(\Delta)$ for a 
pair $\hat{\sigma}_j - \hat{\tau}_j$ is diagonal~\cite{Rulli:10} and exhibits vanishing 
quantum discord for any $\Delta$. For pairs $\hat{\sigma}_j - \hat{\sigma}_{j+1}$ (or $\hat{\tau}_j - \hat{\tau}_{j+1}$), 
the density operator $\hat{\rho}^{j,j+1}(\Delta)$ has off-diagonal terms. Such a state displays 
nonvanishing quantum discord. However, no identification (such as an extremum or a cusp) occurs 
at the critical point $\Delta=1$ (for any local measurement). 
\vspace{-0.5cm}
\begin{figure}[!ht]
 \centering
\includegraphics[scale=0.3]{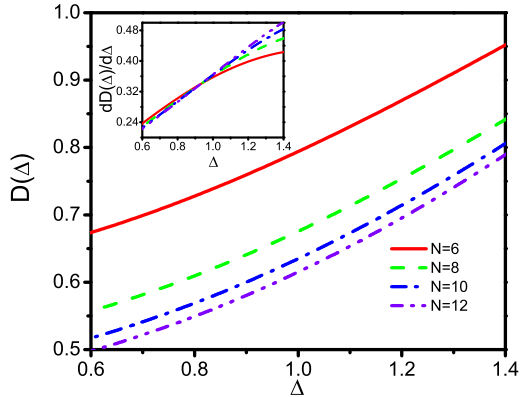}
\vspace{-0.5cm}
\caption{(Color online) 
GQD associated with the $\hat{\sigma}^z$ eigenbasis for a spin quartet in the Ashkin-Teller model for chains up to $N=12$ spins. Inset: Derivative 
of GQD with respect to $\Delta$.}
\label{f4}
\end{figure}

On the other hand, if we consider multipartite correlations, 
GQD is able to identify the QPT as an extremum at the critical point. 
However, such identification does not occur in the basis that minimizes GQD, which is given by the measurement of all spins in the 
$\hat{\sigma}^z$ eigenbasis. Instead, the infinite-order QPT turns out to be correctly characterized if, and only if, local measurements 
are performed in the $\hat{\sigma}^x$ eigenbasis. Remarkably, this is exactly the basis of eigenstates of the single spin 
reduced density operators. Therefore, computation of GQD in such an eigenbasis can be seen as a generalization of MID 
to the multipartite scenario.

\begin{figure}[!ht]
 \centering
\includegraphics[scale=0.3]{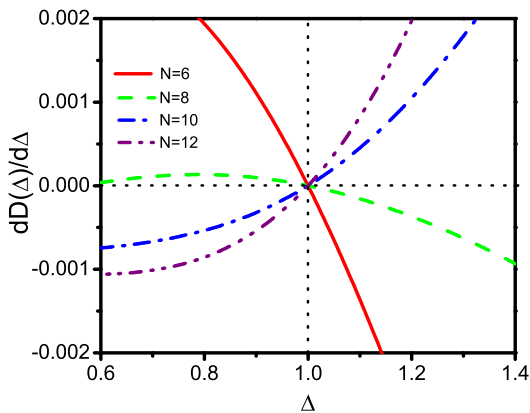}
\vspace{-0.5cm}
\caption{(Color online) 
Derivative of GQD associated with the $\hat{\sigma}^x$ eigenbasis for a spin quartet in the Ashkin-Teller model for chains up to $N=12$ spins.}
\label{f5}
\end{figure}

\begin{figure}[!ht]
 \centering
\includegraphics[scale=0.3]{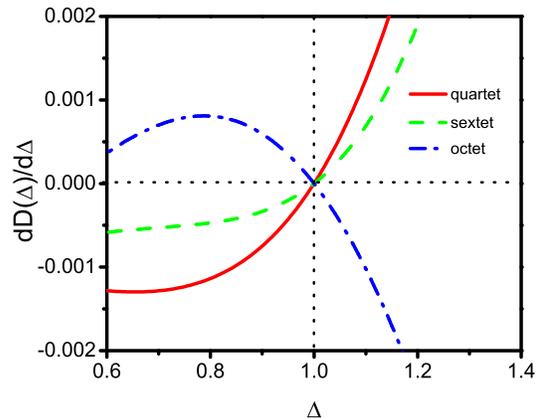}
\vspace{-0.5cm}
\caption{(Color online) 
Derivative of GQD associated with the $\hat{\sigma}^x$ eigenbasis for spin quartets, sextets, and octets in the Ashkin-Teller model for a chain with $N=16$ spins.}
\label{f6}
\end{figure}

We consider groups of $4$ particles 
(quartets) composed by spins $\hat{\sigma}_j - \hat{\sigma}_{j+1} - \hat{\tau}_j - \hat{\tau}_{j+1}$ as well as 
extensions for sextets 
and octets.
For those configurations, we numerically compute GQD relative to local measurements in the $\hat{\sigma}^z$ eigenbasis and 
in the $\hat{\sigma}^x$ eigenbasis for each particle via exact diagonalization of chains up to $16$ spin particles. 
The results are exhibited in Fig.~\ref{f4} for measurements in the $\hat{\sigma}^z$ eigenbasis and in Figs.~\ref{f5} and~\ref{f6} 
for measurements in the $\hat{\sigma}^x$ eigenbasis. We can observe that, in the $\hat{\sigma}^x$ eigenbasis, the identification of the QPT as 
an extremum (vanishing derivative of the GQD) already occurs for quartets in lattices with $N=6$ spins (see Fig.~\ref{f5}). 
Moreover, this characterization is kept for sextets and octets in larger chains (see Fig.~\ref{f6}). 

\section{Conclusion}

In summary, we have proposed a measure for multipartite quantum correlations. 
This measure has been obtained by suitably recasting the standard quantum discord in 
terms of relative entropy and local von Neumann measurements [as given by Eq.~(\ref{DD})].  
In particular, our measure is a systematic extension of the original approach for quantum discord as introduced in 
Ref.~\cite{Ollivier:01}, reducing to it in the particular case of bipartite systems.
Illustrations of its use have been provided for both the Werner-GHZ and the Ashkin-Teller spin chain. 
Further applications of GQD, such as the investigation of multipartite correlations in quantum computation 
and connections with entanglement (see, e.g., Ref.~\cite{Fanchini:10}), are left for future research.

\begin{acknowledgments}
We  acknowledge financial support from the Brazilian agencies CNPq and FAPERJ. 
This work was performed as part of the Brazilian National Institute for Science and 
Technology of Quantum Information (INCT-IQ).

\end{acknowledgments}


\end{document}